\documentclass[a4paper]{article}

\usepackage{comment}
\usepackage[margin=3cm]{geometry}
\usepackage{algorithm2e}
\usepackage[pagebackref]{hyperref}
\renewcommand*\backref[1]{\ifx#1\relax \else ($\uparrow$ #1) \fi}
\usepackage{tikz}
\usetikzlibrary{cd}
\usepackage{enumerate}
\usepackage{amsmath}
\usepackage{amssymb}
\usepackage{amsthm}
\usepackage{mathtools}
\newtheorem{theorem}{Theorem}
\newtheorem{definition}[theorem]{Definition}
\newtheorem{proposition}[theorem]{Proposition}

\title{A potentialization algorithm for games with applications to multi-agent learning in repeated games}

\date{\today}

\author{Philipp Lakheshar\footnote{\texttt{philipp.lakheshar@hotmail.com}}, Sharwin Rezagholi\footnote{\texttt{sharwin.rezagholi@technikum-wien.at}}\\
\small University of Applied Sciences Technikum Wien \\
\small Vienna, Austria
}

\begin{document}
\maketitle

\begin{abstract}
\noindent We investigate an algorithm that assigns to any game in normal form an approximating game that admits an ordinal potential function. Due to the properties of potential games, the algorithm equips every game with a surrogate reward structure that allows efficient multi-agent learning. Numerical simulations using the replicator dynamics show that 'potentialization' guarantees convergence to stable agent behavior.

%\noindent \textbf{MSC 2020}: 91A06 91A14 Game theory, 93D99 Stability of control systems, 68W05 Algorithms in computer science, 37N40 Applications of dynamical systems.
\noindent \textbf{Keywords}: Repeated games, learning, ordinal potential game, algorithmic game theory, replicator dynamics.
%MSC2020: 91A14, 91A20, 91A26, 91A68.
\end{abstract}

\section{Introduction}
Reinforcement learning is the machine learning paradigm in which an agent learns a desired behavior through interaction with its environment. Recent successes in the domain relied on the combination of reinforcement-learning-principles and deep learning. Many potentially relevant scenarios involve the participation of more than one agent. The naive approach to multi-agent reinforcement learning (MARL) is to consider all involved agents as self-motivated learners within a shared environment, that is to employ independent single-agent algorithms. In doing so, the interactive nature of the problem, i.e. that the rewards of individual agents are conditioned on the joint behavior of all agents, is ignored. When applied in multi-agent settings, single-agent algorithms may fail to converge or may converge to undesirable stable states. The complexity of multi-agent learning can be attributed to the inherent nonstationarity of the problem: Agents learn simultaneously and independently and thus continuously change each other's environment as their behavior evolves \cite{zhang2021multi}. Another naive approach is to consider MARL as the single-agent problem of a meta-agent that controls the actions of all agents. Aside from those cases in which the resulting combinatorial explosion would render the problem intractable, many practically relevant cases inherently require the coordination of autonomous agents.

This paper is concerned with multi-agent tasks in the form of repeated games in normal form. Ordinal potential games \cite{monderer1996potential} admit a function defined on the space of joint actions that simultaneously encodes the incentives of all players. In a sense these games are strategically equivalent to common interest games. It is well-known that a wide class of myopic learning methods are guaranteed to converge to Nash equilibria in ordinal potential games \cite{young1993, young2004}. The central idea of the present paper is to replace the reward functions of agents with reward functions that correspond to an ordinal potential game while retaining the incentive structure of the original game as faithfully as possible. This enables the extension of the learnability properties of ordinal potential games to a more general domain of MARL-problems at the price of some distortion of rewards. To gain insight into the effects of our method, we randomly generate $(10 \times 10)$- and $(4 \times 4 \times 4)$-games in normal form, apply the potentialization algorithm, and use the replicator equation \cite{cressman2014replicator} to simulate multiple simultaneous Q-learners \cite{watkins1992q}. We compare the dynamics of learning in the original games and in their potentialized versions and find that fast convergence is indeed guaranteed by potentialization while reward distortions are minimal.

\section{Games and potentials}

\begin{definition}[Finite game in normal form]
A finite game of $n$ players in normal form $G$ is the tuple
$$
G = \left( A = \prod_{i=1}^n A_i , \big\{ u_i: A \to \mathbb{R} \big\}_{i=1}^n \right) ,
$$
where $A_i$ is the finite action set of player $i \in \{1,\dots,n\}$ and $u_i$ denotes the utility function of player $i$. 
\end{definition}

The strategic incentives in a game are encoded in its deviation graph. We slightly abuse notation and write $a_{-i}$ to denote an element of the set $A_{-i} \coloneqq \prod_{j \in \{1,\dots,n\} \setminus \{i\}} A_j$. We denote a stochastic action by player $i$ as $\pi_i \in \Delta(A_i)$, a joint stochastic action as $\pi \in \Delta(A)$, and a joint stochastic action of all players but $i$ as $\pi_{-i} \in \Delta(A_{-i})$.

\begin{definition}[Deviation graph]
The deviation graph of the game $G$ is the weighted directed graph $\Gamma = (V, E, w)$ with vertex set $V=A$, edge set
$$
E = \left\{ a,a' \in A\times A: a=(a_i, a_{-i}), a'=(a'_i, a_{-i}), a_i \neq a'_i \right\} ,
$$
and weighting function $w: E \to \mathbb{R}$ given by
$$
w(a,a') = u_i \big( (a'_i, a_{-i}) \big) - u_i \big( (a_i, a_{-i}) \big) .
$$
The nonnegative deviation graph $\Gamma_0 = (V, E_0, w)$ contains all vertices of $\Gamma$ but only those of its edges with nonnegative weights:
$$
E_0= \left\{ e \in E: w(e) \geq 0 \right\} .
$$
\end{definition}
We extend edge-weights additively to directed paths; the weight of the directed path $p = (e_1, \dots, e_m)$ in a weighted graph equals $w(p) = \sum_{i=1}^m w(e_i)$.

\begin{definition}[Ordinal potential \cite{monderer1996potential}]
The function $\Phi: A \to \mathbb{R}$ is an ordinal potential for the game $G$ if
$$
\Phi(a'_i, a_{-i}) - \Phi(a_i, a_{-i}) > 0 \iff u_i \big( (a'_i, a_{-i}) \big) - u_i \big( (a_i, a_{-i}) \big) > 0 .
$$
A game is called an ordinal potential game if it admits an ordinal potential function.
\end{definition}

Note that setting $u_i (a) = \Phi(a)$ for all $i$, that is changing every player's reward to the value of the ordinal potential function, creates a game whose nonnegative deviation graph has the same topology as the one of the original game, only its weights may change. Therefore a game with ordinal potential $\Phi$ and the common interest game where $u_i (a) = \Phi(a)$ have the same pure Nash equilibria \cite{monderer1996potential}. All potential games have at least one pure Nash equilibrium \cite{monderer1996potential} (an equilibrium in nonstochastic individual actions). This follows immediately from the characterization of the class of ordinal potential games by a property of their deviation graphs: A game admits an ordinal potential if and only if its deviation graph is devoid of weak improvement cycles.

\begin{definition}[Weak improvement cycle]
A weak improvement cycle in a deviation graph is a directed path $p=(e_1, \dots, e_m)$ where $e_m = e_1$, such that $w(e_t) \geq 0$ for all $t \in \{1,\dots,m\}$ and $w(p) > 0$.
\end{definition}

\begin{theorem}[Voorneveld and Nolde \cite{voorneveld1997}]
A game admits an ordinal potential function if and only if its deviation graph is devoid of weak improvement cycles.
\end{theorem}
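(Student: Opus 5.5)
We prove both directions, working in the finite deviation graph $\Gamma$; the substance lies in the sufficiency direction.

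\emph{Necessity.} Suppose $\Phi$ is an ordinal potential and suppose, for contradiction, that $p=(e_1,\dots,e_m)$ is a weak improvement cycle. For every edge $e=(a,a')$ with $w(e)\geq 0$ we have $\Phi(a')\geq\Phi(a)$. Indeed, if $\Phi(a')<\Phi(a)$, then the ordinal potential condition applied to the reverse deviation from $a'$ to $a$ gives $u_i(a)>u_i(a')$, that is $w(e)<0$. Since $w(p)>0$, some edge $e_t$ has $w(e_t)>0$, and for that edge the defining equivalence gives $\Phi$ strictly increasing. Summing the increments of $\Phi$ along the closed path then gives
$$
0=\sum_t \bigl(\Phi(\text{head}(e_t))-\Phi(\text{tail}(e_t))\bigr)>0,
$$
which is a contradiction.

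\emph{Sufficiency.} Assume $\Gamma$ has no weak improvement cycle. We build $\Phi$ from the structure of $\Gamma_0$.

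\begin{enumerate}
\item Define the preorder $a\preceq a'$ if there is a directed path from $a$ to $a'$ in $\Gamma_0$, or $a=a'$.
\item Group vertices into the strongly connected components of $\Gamma_0$. The assumption says exactly that no edge of positive weight lies on a directed cycle of $\Gamma_0$. Hence every edge inside a strongly connected component has weight $0$, and every positive-weight edge joins two distinct components.
\item The condensation of $\Gamma_0$ is a finite DAG. Choose a topological numbering and let $\Phi(a)$ be the index of the component of $a$. A simpler alternative is $\Phi(a)=$ the number of vertices $b$ with $b\preceq a$.
\item Verify the equivalence for a deviation $a\to a'$ by player $i$, in three cases:
\begin{itemize}
\item If $u_i(a')>u_i(a)$, the edge has positive weight, so it joins different components $C(a)\to C(a')$. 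Then $C(a')$ is strictly later in the order and $\Phi(a')>\Phi(a)$.
\item If $u_i(a')=u_i(a)$, both $(a,a')$ and $(a',a)$ lie in $E_0$, so $a$ and $a'$ are in the same component and $\Phi(a')=\Phi(a)$.
\item If $u_i(a')<u_i(a)$, the reverse edge is positive, so $\Phi(a)>\Phi(a')$.
\end{itemize}
These three cases give both implications of the defining equivalence.
\end{enumerate}

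The main care point is step 2: translating "no weak improvement cycle" into "every positive edge leaves its strongly connected component". This uses the fact that a positive edge from $a$ to $a'$ together with a $\Gamma_0$-path from $a'$ back to $a$ forms a closed path of nonnegative edges with positive total weight. Such a closed walk can be written in the paper's format with $e_m=e_1$ by repeating the first edge. That repetition counts $w(e_1)$ twice, but all weights are nonnegative, so the total stays positive and the walk is still a weak improvement cycle.

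The second care point is the counting-style definition of $\Phi$. For a strict edge it requires $\{b\preceq a\}\subsetneq\{b\preceq a'\}$. Containment holds by transitivity. Properness holds because $a'\not\preceq a$, which is again step 2: a path from $a'$ back to $a$ would put the positive edge on a cycle.
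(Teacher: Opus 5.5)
Your proof is correct and follows essentially the route implicit in the paper. The paper only cites the theorem, but it immediately restates the condition as ``every edge inside a strongly connected component of $\Gamma_0$ has weight zero'' and builds $\Phi$ along a topological sort of the condensation, just as in your steps 2--4. The only difference is the choice of values: the paper uses the longest-path recursion $\Phi([v_j])=\max\{\Phi(v')+w(e)\}$, which also makes $\Phi$ grow by at least the utility gain along each edge (needed later for the Proposition), whereas your topological index or reachability count is simpler and already suffices for the ordinal property.
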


The condition of Voorneveld and Nolde can be stated as: If there is a cycle, then its weight must be zero. It is known that a wide class of multi-agent learning processes converge to pure Nash equilibria in games with ordinal potentials (see \cite{young1993} and \cite[Sections 6.4-6.5, Footnote 6]{young2004}). Potential games thus have two desirable properties for multi-agent learning: They are learnable by myopic individual learning rules, and they possess Nash equilibria in deterministic actions.

\section{Potentialization algorithm}

We proceed in three steps: (i) We calculate an adjusted graph on the basis of the nonnegative deviation graph; (ii) We calculate an ordinal potential function using this adjusted graph, whereby we crucially employ the concept of condensing graphs to their strongly connected components; (iii) We define a common interest game using the obtained ordinal potential function. The method is summarized in Algorithm \ref{algo}.
\begin{definition}[Strongly connected component]
Consider a directed graph $(V, E)$. The vertices $C \subseteq V$ are a strongly connected component of the graph if and only if there exists a directed path from $v$ to $v'$ for any two nonidentical vertices $v,v' \in C$.
\end{definition}
All edges within a strongly connected component of $\Gamma_0$ must have edge weight equal to zero to fulfill the Voorneveld-Nolde condition, which may be equivalently stated as: A game admits an ordinal potential function if and only if all edges within the strongly connected components of its nonnegative deviation graph have zero weight. We compute the condensation of $\Gamma_0$, slightly generalizing the notion of condensation to the weighted case. Recall that the condensation of an unweighted graph is the quotient graph with respect to strongly connected components.
\begin{definition}[Condensation of weighted directed graphs]
The condensation of the weighted directed graph $\big( V, E, w \big)$ is the condensation of its underlying unweighted graph $\big( V, E\big)$ equipped with the weight function
$$
(C, K) \mapsto \max_{v, v'} \left\{ w \big( (v,v') \big) : v \in C, v' \in K \right\} ,
$$
where $C,K$ denote strongly connected components and $v, v'$ denote vertices.
\end{definition}
The computation of strongly connected components can be achieved by algorithms of complexity $O(|V|+|E|)$ \cite{tarjan1972, sharir1981}. Our method proceeds on the basis of the condensation $\Gamma^*$ of the graph $\Gamma_0$. The potential function must be constant when restricted to any strongly connected component. A topological sorting of a directed graph is a linear ordering of its vertices such that, whenever there is an edge from vertex $v$ to vertex $w$, the vertex $v$ precedes $w$ in the topological sorting. A graph admits a topological sorting if and only if it does not have directed cycles. Therefore the condensation of any game's nonnegative deviation graph admits a topological sorting. Algorithms for topological sorting of complexity $O(|V|+|E|)$ are available \cite{kahn1962}. Our method consists of topologically sorting the vertices of the condensation $\Gamma^*$ and separately defining the values of the potential function for each vertex of $\Gamma^*$. For each of the vertices of $\Gamma^*$ we set the values of the potential function to zero whenever the vertex has no incoming edges. We then proceed along the topological sorting whereby we consider for every incoming edge the sum of the respective edge weight with the value of the potential function on the origin of the respective edge; the largest of the considered values is assigned as the value of the potential function. The method is best understood by example (Figure \ref{eg}).
\begin{figure}
\begin{center}
\begin{tikzcd}
a \ar{r}{1} \ar[bend left=40pt]{rr}{3} & b \ar{r}{2} \ar[bend right=40pt]{rr}{4} & c \ar{r}{5} & d \ar{r}{1} & e
\end{tikzcd}
\end{center}
\begin{eqnarray*}
\Phi(a) &=& 0 \\
\Phi(b) &=& \Phi(a) + 1 = 1 \\
\Phi(c) &=& \max \{ \Phi(a)+3, \Phi(b)+2 \} = 3 \\
\Phi(d) &=& \max \{ \Phi(c)+5, \Phi(b)+4 \} = 8 \\
\Phi(e) &=& \Phi(d) + 1 = 9
\end{eqnarray*}
\caption{Construction of the potential function.}
\label{eg}
\end{figure}
The common interest game where the utility functions of all players are given by the computed potential function is the potentialized game. All steps of the method are summarized in Algorithm \ref{algo}.
\begin{algorithm}
\KwData{Game $G$ in normal form.}
\KwResult{Potentialized approximation to $G$.}
Compute the nonnegative deviation graph $\Gamma_0$\;
Compute the condensation $\Gamma^*$ of $\Gamma_0$\;
Topologically sort the vertices of $\Gamma^*$ to obtain the sorting $\{v_1, \dots, v_k\}$\;
\For{$j \in \{1,\dots,k\}$}{
\eIf{$v_j$ has no incoming edges}{
$\Phi \big( [v_j] \big) \leftarrow 0$\;
}{
$\Phi \big( [v_j] \big) \leftarrow \max_{v'} \left\{ \phi(v') + w(e) : v' \xrightarrow{e} v_j \right\}$\;
}
}
Set $u_i (a) = \Phi(a)$ for all $a \in A$ and $i \in \{1,\dots,n\}$\;
\caption{Potentialization algorithm. We use the notation $[v]$ to denote the equivalence class of vertex $v$ with respect to being situated in the same strongly connected component.}
\label{algo}
\end{algorithm}

All steps of the method can be computed using algorithms of complexity $O(|V|+|E|)$. Let $k \coloneqq \max_i |A_i|$. We have $|V| \leq k^n$, and $|E| \leq (k-1) k n \leq k^2 n$. Therefore the complexity of the entire algorithm is $O(k^n)$. It must be acknowledged that this poses serious limits on the scalability of the algorithm, especially with respect to the number of players.

\section{Properties of the algorithmic result}

Our method is optimal insofar as it computes the smallest function $\Phi$ that is constant on strongly connected components of the original nonnegative deviation graph while increasing along its edges by at least the edges' original weights. The latter property is desirable, since any reasonable single-agent learning system would finally make the possible deviator recognize the deviation gain. By making the maximal deviation gain the common deviation gain to all players, the inevitability of deviation is immediately apparent to all players, which increases the speed of the learning process.

The algorithm turns any game into a game with at least one nonstochastic equilibrium. The algorithm may turn action profiles which are not equilibria into equilibria. This is vividly illustrated by the family of games
$$
\begin{bmatrix}
h,0 & 0,1 \\
0,1 & 1,0
\end{bmatrix}
$$
where $h > 1$. In the unique Nash equilibrium the column player uses their first action with probability $1/(h+1)$ and the row player uses each of their actions with equal probability. The sum of expected utilities in equilibrium equals $1 + (h-1) / (2 + 2h)$. The potentialized game is
$$
\begin{bmatrix}
0,0 & 0,0 \\
0,0 & 0,0
\end{bmatrix}
$$
for any value $h>1$; a game in which any (stochastic) behavior is a Nash equilibrium. In this example potentialization may lead to an outcome where the sum of expected utilities decreases by $(h-1) / (2 + 2h)$, a term that can be arbitrarily large, depending on $h$. This shows that large utility distortions are possible. But such effects can not be precluded unless one is willing to force players to act against their preferences, that is to break a cycle in the deviation graph by 'reversing an edge'. A less dramatic instance is Figure \ref{eg3}, although action profiles that involve weakly dominated actions are turned into equilibria. Even when potentialization increases the number of Nash equilibria, not all of these equilibria may be attractive for the learning dynamics. An example where potentialization increases the number of equilibria while still effectively undertaking equilibrium selection is Figure \ref{eg4}.
\begin{figure}
\begin{center}
Original game:

\vspace{10pt}

$
\begin{bmatrix}
3,3 & 2,0 & 1,0 \\
0,2 & 2,2 & 1,0 \\
0,1 & 0,1 & 1,1
\end{bmatrix}
$

\vspace{10pt}

Deviation graph:

\vspace{10pt}

\begin{tikzcd}[column sep=huge, row sep=huge]
\alpha & \bullet \ar[swap]{l}{3} & \bullet \ar[swap, leftrightarrow]{l}{0} \ar[swap, bend right=30pt]{ll}{3} \\
\bullet \ar{u}{3} \ar[leftrightarrow]{r}{0} & \beta \ar[bend left=30pt, leftrightarrow]{u}{0} & \bullet \ar[bend left=30pt]{l}{2} \ar[leftrightarrow, swap]{u}{0} \ar[swap, bend right=30pt]{ll}{2} \\
\bullet \ar[bend left=30pt]{uu}{3} \ar[leftrightarrow]{u}{0} \ar[swap, leftrightarrow]{r}{0} & \bullet \ar{u}{2} \ar[swap, bend right=30pt]{uu}{2} & \gamma \ar[leftrightarrow, swap]{u}{0} \ar[leftrightarrow]{l}{0} \ar[swap, leftrightarrow, bend right=30pt]{uu}{0} \ar[leftrightarrow, bend left=30pt]{ll}{0}
\end{tikzcd}

\vspace{10pt}

Condensation:

\vspace{10pt}

\begin{tikzcd}[column sep=huge, row sep=huge]
\alpha & \bullet \ar[swap]{l}{3}
\end{tikzcd}

\vspace{10pt}

Potentialized game:

\vspace{10pt}

$
\begin{bmatrix}
3,3 & 0,0 & 0,0 \\
0,0 & 0,0 & 0,0 \\
0,0 & 0,0 & 0,0
\end{bmatrix}
$
\end{center}
\caption{Coordination game. The pure Nash equilibria of the original game are Pareto-ordered: $\alpha > \beta > \gamma$. The second and the third action of both players is weakly dominated. There is a cycle containing all vertices but $\alpha$. All pure equilibria are retained and two additional pure equilibria are created.}
\label{eg3}
\end{figure}
\begin{comment}
\begin{figure}
\begin{center}
Orig. game:
$
\begin{bmatrix}
2,1 & 0,1 \\
1,0 & 0,0
\end{bmatrix}
$
, Dev. graph:
\begin{tikzcd}
\alpha \ar[leftrightarrow]{r}{0} & \beta \ar[leftrightarrow]{d}{0} \\
\bullet \ar{u}{1} & \gamma \ar[leftrightarrow]{l}{0}
\end{tikzcd}
, Pot. game:
$
\begin{bmatrix}
0,0 & 0,0 \\
0,0 & 0,0
\end{bmatrix}
$
\end{center}
\caption{The three pure Nash equilibria of the original game are Pareto-ordered: $\alpha > \beta > \gamma$. The second action of the row player is weakly dominated in the original game.}
\label{eg2}
\end{figure}
\end{comment}
\begin{figure}
\begin{center}
Original game:

\vspace{10pt}

$
\begin{bmatrix}
0,1 & 1,2 & 0,0 \\
1,1 & 0,0 & 0,0 \\
0,0 & 0,0 & 1,1
\end{bmatrix}
$

\vspace{10pt}

Deviation graph:

\vspace{10pt}

\begin{tikzcd}[column sep=huge, row sep=huge]
\bullet \ar[swap, leftrightarrow, bend right=30pt]{dd}{0} \ar[swap]{d}{1} \ar{r}{1} & \alpha & \bullet \ar[swap]{l}{2} \ar[swap, bend right=30pt]{ll}{1} \ar[leftrightarrow]{d}{0} \ar[bend left=30pt]{dd}{1} \\
\beta & \bullet \ar[bend left=30pt]{u}{1} \ar{l}{1} \ar[bend right=30pt, leftrightarrow, swap]{r}{0} \ar[swap, leftrightarrow]{d}{0} & \bullet \ar{d}{1} \ar[bend right=30pt]{ll}{1} \\
\bullet \ar{u}{1} \ar[bend right=30pt, swap]{rr}{1} & \bullet \ar[swap, bend right=30pt]{uu}{1} \ar[swap]{r}{1} \ar[leftrightarrow]{l}{0} & \gamma
\end{tikzcd}

\vspace{10pt}

Condensation:

\vspace{10pt}

\begin{tikzcd}
{} & \alpha & {} \\
\beta & \bullet \ar{u}{2} \ar{l}{1} \ar{dr}{1} & {} \\
{} & {} & \gamma
\end{tikzcd}

\vspace{10pt}

Potentialized game:

\vspace{10pt}

$
\begin{bmatrix}
0,0 & 2,2 & 0,0 \\
1,1 & 0,0 & 0,0 \\
0,0 & 0,0 & 1,1
\end{bmatrix}
$
\end{center}
\caption{Game from Voorneveld and Nolde (1997). The vertices denoted by $\bullet$ form a cycle. All three equilibria are retained but the Pareto-best original equilibrium becomes preferable to both players.}
\label{eg4}
\end{figure}

\begin{proposition}
(i) If $a \in A$ is a strict Nash equilibrium of the original game, then it is a Nash equilibrium of the potentialized game.
(ii) If $a \in A$ is a Nash equilibrium of the potentialized game, then its constituent actions were not strictly dominated in the original game.
\end{proposition}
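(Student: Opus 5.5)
The plan rests on one structural observation about the constructed $\Phi$, and then treats (i) and (ii) as two short arguments about single deviation edges. The observation is that $\Phi$ is weakly increasing along every edge of $\Gamma_0$, and strictly increasing along every positive-weight edge joining two different strongly connected components. Indeed, $\Phi$ is constant on each strongly connected component. If $e=(v,v')\in E_0$ joins components $C\neq K$, then $(C,K)$ is an edge of $\Gamma^*$ with weight at least $w(e)\geq 0$, and the update rule of Algorithm~\ref{algo} gives $\Phi(K)\geq \Phi(C)+w(C,K)\geq \Phi(C)+w(e)$. I would also record the elementary fact that every edge $e=(a,a')\in E$ has a reverse edge $(a',a)\in E$ of weight $-w(e)$. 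Hence every negative-weight deviation has a positive-weight reverse edge lying in $\Gamma_0$.

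For (i), let $a$ be a strict Nash equilibrium of $G$. Then every edge leaving $a$ in $\Gamma$ has strictly negative weight, so $a$ has no outgoing edges in $\Gamma_0$. In particular $\{a\}$ is a singleton strongly connected component, since no other vertex is reachable from it. Now take any unilateral deviation $a'=(a'_i,a_{-i})$. The edge $(a',a)$ has weight $u_i(a)-u_i(a')>0$, so it lies in $E_0$ and joins the distinct components $[a']$ and $\{a\}$. By the observation above, $\Phi(a)\geq \Phi(a')+w\big((a',a)\big)>\Phi(a')$. Since every player's utility in the potentialized game equals $\Phi$, no player gains by deviating from $a$. So $a$ is a Nash equilibrium of the potentialized game, and in fact a strict one.

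For (ii), I would argue by contraposition. Suppose $a$ is a Nash equilibrium of the potentialized game and some $a_i$ is strictly dominated in $G$ by $b_i$. Set $a'=(b_i,a_{-i})$. Then $w\big((a,a')\big)=u_i(a')-u_i(a)>0$, so $(a,a')\in E_0$ and $\Phi(a')\geq\Phi(a)$. If $[a]\neq[a']$, the observation gives $\Phi(a')\geq \Phi(a)+w\big((a,a')\big)>\Phi(a)$. Then $i$ profitably deviates in the potentialized game, a contradiction.

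The main obstacle is the remaining case $[a]=[a']$, where $\Phi(a)=\Phi(a')$ and no contradiction arises from this edge alone. To exclude it, I would try to show that a positive-weight edge out of a strictly dominated action cannot lie on a cycle of $\Gamma_0$. Any return path from $a'$ to $a$ must at some step change player $i$'s action back to $a_i$ through a weakly improving move $(z_i,b_{-i})\to(a_i,b_{-i})$. Strict dominance forbids $z_i=b_i$, but not other $z_i$. With two actions for player $i$ the argument closes. With three or more actions I expect a cycle such as $b_i\to z_i\to a_i$ to be constructible through moves of the other players, which would place $a$ and $a'$ in one component. If that construction succeeds, (ii) would need an extra hypothesis, for instance that the component of $a$ contains no positive-weight edge, or that $|A_i|=2$. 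I would therefore first search for such a counterexample before committing to the general claim.
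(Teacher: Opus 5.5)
\paragraph{Part (i).} Your proof is correct and uses the paper's idea: a strict equilibrium is a singleton sink component of $\Gamma_0$. You also supply the step the paper leaves implicit. Every unilateral neighbor $a'$ of $a$ sends a positive-weight edge into $\{a\}$, so $\Phi(a)\geq\Phi(a')+w\big((a',a)\big)>\Phi(a')$.

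\paragraph{Part (ii).} Your suspicion is right. The case $[a]=[a']$ is a genuine gap, it is exactly where the paper's own proof breaks, and statement (ii) is false as stated.

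The paper's proof uses $\Phi(a'_i,a'_{-i})-\Phi(a_i,a'_{-i})\geq u_i(a'_i,a'_{-i})-u_i(a_i,a'_{-i})$ for every edge of $\Gamma_0$. As your structural observation shows, this holds only for edges joining distinct strongly connected components, because $\Phi$ is constant on each component. The paper's description of $\Phi$ as constant on components \emph{and} increasing along all edges by at least their weights is contradictory whenever a component contains a positive-weight edge.

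The construction you anticipated already exists in a $3\times 2$ game. The row player has actions $a,b,z$ (top to bottom) and the column player has actions $L,R$:
\[
\begin{bmatrix}
1,0 & 0,0 \\
2,0 & 1,1 \\
0,1 & 2,0
\end{bmatrix}
\]
Row $a$ is strictly dominated by row $b$, since $2>1$ and $1>0$. In $\Gamma_0$ there is the cycle
\[
(a,L)\to(b,L)\to(b,R)\to(z,R)\to(z,L)\to(a,L),
\]
and every edge on it has weight $1$. The moves $b\to z$ at $R$ and $z\to a$ at $L$ realize exactly your path $b_i\to z_i\to a_i$. In addition, $(a,L)$ and $(a,R)$ are joined by weight-$0$ edges in both directions.

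Hence $\Gamma_0$ is strongly connected, and the condensation is a single vertex with no incoming edges. So $\Phi\equiv 0$, and every profile is a Nash equilibrium of the potentialized game, including $(a,L)$, although $a$ is strictly dominated. In particular $\Phi(b,L)-\Phi(a,L)=0<1=u_1(b,L)-u_1(a,L)$, which directly refutes the inequality the paper relies on.

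\paragraph{What survives.} Your argument does prove a corrected statement. Suppose $a$ is a Nash equilibrium of the potentialized game and $a_i$ is strictly dominated by $b_i$. Then $a$ and $(b_i,a_{-i})$ lie in the same strongly connected component of $\Gamma_0$, and that component contains the positive-weight edge between them. Consequently (ii) holds under either hypothesis you propose:
\begin{itemize}
\item $|A_i|=2$ for the player in question;
\item the component of $a$ contains only zero-weight edges, which holds in particular whenever the original game is already an ordinal potential game.
\end{itemize}
Your plan to look for a counterexample before committing to the general claim was the right call. Present (ii) in this restricted form, together with the counterexample above.
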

\begin{proof}
(i) Since $a$ is a strict Nash equilibrium, it is a sink of the original game's deviation graph. Since there are no outgoing edges, $a$ also forms a strongly connected component of the condensed graph. It is a sink of the condensed graph. Therefore $a$ is a Nash equilibrium of the potentialized game.
(ii) We prove the contrapositive statement: If $a$ includes an action that is strictly dominated in the original game, then $a$ is not a Nash equilibrium of the potentialized game. Let $a=(a_i, a_{-i})$ where $a_i$ is strictly dominated in the original game. Then there exists $a'_i \in A_i \setminus \{a_i\}$ such that
$$
u_i (a'_i, a'_{-i}) > u_i (a_i, a'_{-i})
$$
for all $a'_{-i} \in A_{-i}$. Since
$$
\Phi (a'_i, a'_{-i}) - \Phi (a_i, a'_{-i}) \geq u_i (a'_i, a'_{-i}) - u_i (a_i, a'_{-i}) > 0
$$
we conclude
$$
\Phi (a'_i, a_{-i}) > \Phi (a_i, a_{-i})
$$
which implies that deviation from $(a_i, a_{-i})$ benefits $i$ in the potentialized game, i.e. $(a_i, a_{-i})$ is not a Nash equilibrium.
\end{proof}

Two methods to assign an approximating exact potential game to a game in strategic form have been suggested by Candogan et al. \cite{candogan2011, candogan2013}. An exact potential function increases by exactly the deviation gain along an edge of the deviation graph. An exact potential is an ordinal potential, but the converse is not the case \cite{monderer1996potential}. The methods of Candogan et al. are numerically more faithful to the original utilities, being based on a distance function defined via the utilities, but may disrespect the direction of edges in the original deviation graph. It is for example possible that the orientation of an edge within a strongly connected component of the deviation graph is changed \cite[Example 1.1]{candogan2011}. While our approach may lead to large utility distortions its property of never 'forcing players to act contrary to their preferences' is meaningful in its own right.

\section{Numerical experiments}

The replicator equation was introduced to model the dynamics of population games in evolutionary biology \cite{taylor1978}. Although it appeared in the context of evolutionary theory, the replicator equation is formally connected to multi-agent learning (e.g. \cite{tuyls2006evolutionary}).

\begin{definition}[Replicator differential equation \cite{taylor1978}]
The replicator equation for a game in normal form is the differential equation 
$$
\frac{\dot {\pi_i}_k}{{\pi_i}_k} = \sum_{a_{-i} \in A_{-i}} u_i \Big( {a_i}_k, a_{-i} \Big) \pi_{-i} \Big( a_{-i} \Big) - \sum_{a \in A} u_i (a) \pi (a) ,
$$
where ${\pi_i}_k$ denotes the probability that agent $i$ takes action $k$.
\end{definition}
The replicator system is defined on $\Delta(A)$, the set of joint stochastic actions. The replicator equation states that the relative change of the probability that agent $i$ takes the action $k$ equals the difference between, on the one hand, the expected utility of the action for $i$ given the joint stochastic action of the other agents, and, on the other hand, the expected utility of the agent given the current joint stochastic action.

The replicator equation can be obtained from the dynamics of Q-learning agents in a repeated game \cite{kianercy2012dynamics}. Q-learning is a commonly used off-policy reinforcement learning algorithm \cite{watkins1992q, deepq}. Kianercy and Galstyan \cite{kianercy2012dynamics} consider Q-learning agents during repeated play of a game in normal form and show that the replicator equation is obtained after the following steps: (i) Taking the limit of continuous time by linear interpolation; (ii) Taking the limit of small experimentation; (iii) A mean-field simplification. We therefore use the replicator equation to simulate multi-agent learning. An ordinal potential function is a Lyapunov function for the replicator equation of the respective game (e.g. \cite[page 87]{laraki2019}). This is another argument in favor of potential games for multi-agent learning.

\paragraph{Random generation of games.}
We specified the number of agents $n$ and the cardinality of their action sets. Games were constructed by independently drawing values
$$
u_i (a) \sim \text{uniform} \left( \{1, \dots, |A|\} \right) .
$$
Before applying the replicator equation to any game $\big( A, \left\{ u_i \right\}_{i=1}^n \big)$ we jointly rescaled all rewards to the interval $[0,1]$ by setting
$$
u'_i (a) = \frac{u_i (a) - \min_{i,a} u_i (a)}{\max_{i,a} u_i (a) - \min_{i,a} u_i (a)}
$$
if $\min_{i,a} u_i (a) < \max_{i,a} u_i (a)$ and $u'_i(a)=0$ otherwise.

\paragraph{Experiments.}
At each step $t$ we calculated the difference between the current and the previous joint stochastic action. When the quantity
\begin{equation}\label{policy_distance}
\beta (t) = \max_i \| \pi_i (t) - \pi_i (t-1) \|_2
\end{equation}
did not exceed the value $10^{-9}$ for $10^3$ steps we declared convergence. We used the fourth-order Runge-Kutta method with the step size $10^{-2}$. The starting points $\pi_i( 0)$ for our simulations were independently drawn from uniform distributions on $\Delta (A_i)$. Since games without potential may never lead to convergence under the replicator equation, we stopped the replicator system on the original games latest at the number of steps that sufficed for the potentialized version to converge. We generated $1,000$ games of size $(4 \times 4 \times 4)$, that is games with three players and four actions per player, and $1,000$ games of type $(10 \times 10)$, that is games with two players and ten actions per player.

Our numerical experiments consist of the steps: (i) A game $G$ is randomly generated;
(ii) The potentialization algorithm is applied to $G$ to obtain the potentialized game $G_\Phi$; (iii) The game $G_\Phi$ is normalized and its replicator equation is simulated until it converges; (iv) The game $G$ is normalized and its replicator equation is simulated for as long as for the potentialized game.

\paragraph{Results.} Figures \ref{10_10} and \ref{4_4_4} show the progression of the policy variability measure $\beta$ (Equation \ref{policy_distance}). The figures also show the average instantaneous rewards of the agents when using the current policy in the original game
\begin{equation}\label{average_reward}
\rho (t) = \frac{1}{n} \sum_{i=1}^n \sum_{a \in A} u_i (a) \pi (t)_a .
\end{equation}
For the $(10 \times 10)$-games, $96.6\%$ of the potentialized instances converged within our time limit of $10^5$ steps, whereas $8.6\%$ of the original instances converged. Of the $(4 \times 4 \times 4)$-games, $90.6\%$ of the potentialized games converged, whereas $11.8\%$ of the original instances converged.
\begin{figure}
\begin{center}
\includegraphics[width=\textwidth]{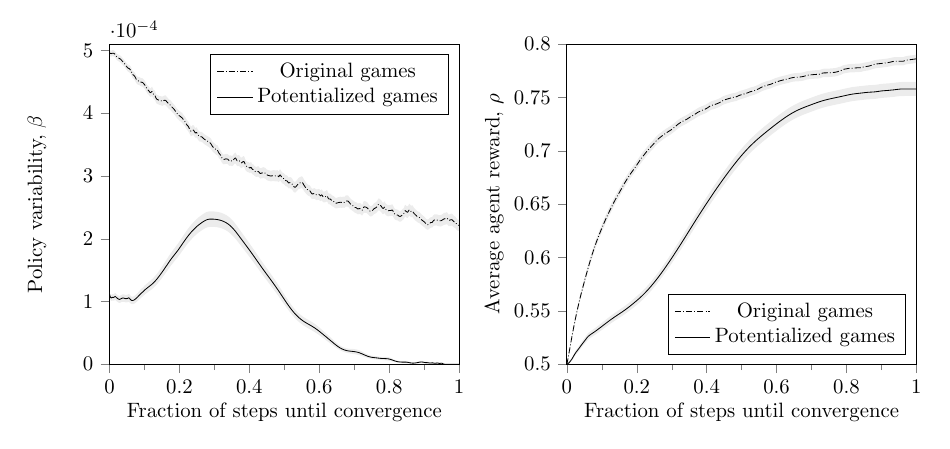}
\end{center}
\caption{Experimental results for $(10 \times 10)$-games ($\pm$ empirical standard deviation).}\label{10_10}
\end{figure}
\begin{figure}
\begin{center}
\includegraphics[width=\textwidth]{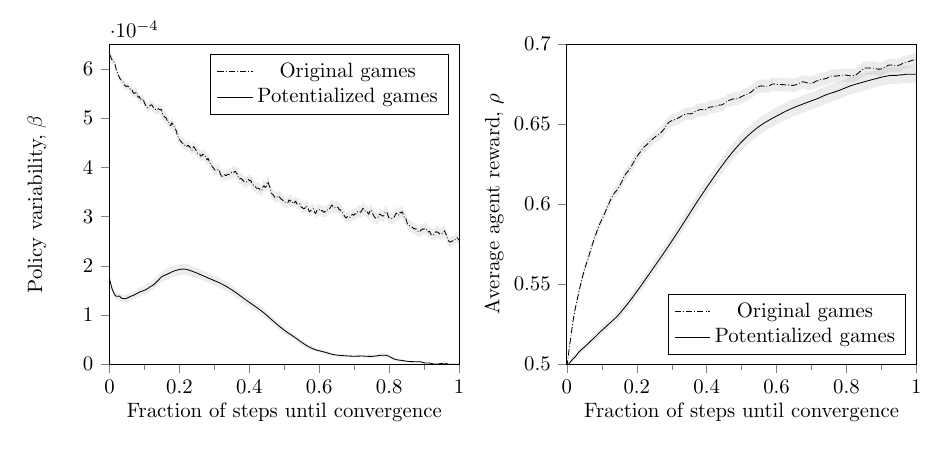}
\end{center}
\caption{Experimental results for $(4 \times 4 \times 4)$-games ($\pm$ empirical standard deviation).}\label{4_4_4}
\end{figure}
For the $(10 \times 10)$- as well as for the $(4 \times 4 \times 4)$-games the policy variability (Equation \ref{policy_distance}) of the potentialized games lies consistently below the level of the original games. While its evolution displays a consistently decreasing trend for the original games, there is an initial increase in the case of the potentialized games followed by a consistent decrease until the policy variability vanishes. The curve of the potentialized games is much smoother, hinting at greater stability. We have found that in many individual cases the original games exhibit periodical patterns and volatile trajectories. Concerning the average expected reward of agents (Equation \ref{average_reward}), the potentialized games exhibit slightly smaller average expected reward but are smoother. In the case of $(10 \times 10)$-games, the original games reach an average agent reward of $0.79$ at the end of the experiment, while the average reward in the potentialized games ends at $0.76$, which corresponds to $96.4\%$ of the average reward in the original games. For $(4 \times 4 \times 4)$-games, the curves end at $0.69$ for the original games and at $0.68$ for the potentialized games, which corresponds to $98.6\%$ of the average reward in the original games.

\section{Conclusion}

The present paper presented an algorithm that approximates a game in normal form by an ordinal potential game. Numerical evidence suggests that this potentialization method permits fast multi-agent learning. The price of stability consists of possible decreases in average agent reward. The method is promising for application in the context of reinforcement learning, either by exclusively using the potentialized reward structure, or by using the potentialized reward structure during a burn-in phase and to later replace it by the original reward structure. The latter usage speeds up the learning process on the 'common-interest-part' of the game and allows to quickly focus on the 'competitive part' of the game. Future work could investigate the extension of the present approach to multi-agent Markov decision processes, that is to stochastic games \cite{shapley1953}. The extension to stochastic games might proceed via the potentialization of their implicitly defined instantaneous continuation games.

%\paragraph{Author contribution.} SR conceptualized and supervised research. PL implemented the algorithm and undertook numerical experiments. PL and SR wrote the paper.

\bibliographystyle{plain}
\bibliography{Literatur}

@article{candogan2011,
author = {Ozan Candogan and Ishai Menache and Asuman Ozdaglar and Pablo A. Parrilo},
journal = {Mathematics of Operations Research},
number = {3},
pages = {474--503},
title = {Flows and Decompositions of Games: {H}armonic and Potential Games},
volume = {36},
year = {2011}
}

@article{candogan2013,
title = {Dynamics in near-potential games},
journal = {Games and Economic Behavior},
volume = {82},
pages = {66-90},
year = {2013},
doi = {https://doi.org/10.1016/j.geb.2013.07.001},
author = {Ozan Candogan and Asuman Ozdaglar and Pablo A. Parrilo}
}

@article{shapley1953,
author = {Lloyd Shapley},
title = {Stochastic Games},
journal = {Proceedings of the National Academy of Sciences},
volume = {39},
number = {10},
pages = {1095-1100},
year = {1953},
doi = {10.1073/pnas.39.10.1095}
}

@article{kahn1962,
author = {A.B. Kahn},
title = {Topological sorting of large networks},
year = {1962},
volume = {5},
number = {11},
doi = {10.1145/368996.369025},
journal = {Communications of the ACM},
pages = {558–562}
}

@article{tarjan1972,
author = {Robert Tarjan},
title = {Depth-First Search and Linear Graph Algorithms},
journal = {SIAM Journal on Computing},
volume = {1},
number = {2},
pages = {146-160},
year = {1972},
doi = {10.1137/0201010}
}

@article{sharir1981,
title = {A strong-connectivity algorithm and its applications in data flow analysis},
journal = {Computers and Mathematics with Applications},
volume = {7},
number = {1},
pages = {67-72},
year = {1981},
doi = {https://doi.org/10.1016/0898-1221(81)90008-0},
author = {M. Sharir}
}

@article{young1993,
 author = {H. Peyton Young},
 journal = {Econometrica},
 number = {1},
 pages = {57--84},
 title = {The Evolution of Conventions},
 volume = {61},
 year = {1993}
}

@book{young2004,
    author = {H. Peyton Young},
    title = {Strategic Learning and Its Limits},
    publisher = {Oxford University Press},
    year = {2004}
}

@book{laraki2019,
title={Mathematical Foundations of Game Theory},
year={2019},
author={Rida Laraki and Jerome Renault and Sylvain Sorin},
publisher={Springer}
}

@article{voorneveld1997,
title = {A Characterization of Ordinal Potential Games},
journal = {Games and Economic Behavior},
volume = {19},
number = {2},
pages = {235--242},
year = {1997},
doi = {https://doi.org/10.1006/game.1997.0554},
author = {Mark Voorneveld and Henk Norde}
}

@article{deepq,
    author = {Volodymyr Mnih and Koray Kavukcuoglu and David Silver and Andrei Rusu and Joel Veness and Marc Bellemare and Alex Graves and Martin Riedmiller and Andreas Fidjeland and Georg Ostrovski},
    title = {Human-level control through deep reinforcement learning},
    journal = {Nature},
    year = {2015},
    volume = {518},
    issue = {7540},
    pages = {529--533}
}

@article{taylor1978,
title = {Evolutionary stable strategies and game dynamics},
journal = {Mathematical Biosciences},
volume = {40},
number = {1},
pages = {145--156},
year = {1978},
doi = {https://doi.org/10.1016/0025-5564(78)90077-9},
author = {Peter Taylor and Leo Jonker}
}

@inproceedings{zhang2021multi,
  title={Multi-agent reinforcement learning: {A} selective overview of theories and algorithms},
  author={Zhang, Kaiqing and Yang, Zhuoran and Basar, Tamer},
  booktitle={Handbook of Reinforcement Learning and Control},
  pages={321--384},
  year={2021},
  publisher={Springer}
}

@article{watkins1992q,
  title={Q-learning},
  author={Watkins, Christopher and Dayan, Peter},
  journal={Machine Learning},
  volume={8},
  pages={279--292},
  year={1992}
}

@article{monderer1996potential,
  title={Potential games},
  author={Monderer, Dov and Shapley, Lloyd},
  journal={Games and Economic Behavior},
  volume={14},
  number={1},
  pages={124--143},
  year={1996}
}

@article{tuyls2006evolutionary,
  title={An evolutionary dynamical analysis of multi-agent learning in iterated games},
  author={Tuyls, Karl and Hoen, Pieter Jan and Vanschoenwinkel, Bram},
  journal={Autonomous Agents and Multi-Agent Systems},
  volume={12},
  pages={115--153},
  year={2006}
}

@article{cressman2014replicator,
  title={The replicator equation and other game dynamics},
  author={Cressman, Ross and Tao, Yi},
  journal={Proceedings of the National Academy of Sciences},
  volume={111},
  pages={10810--10817},
  year={2014}
}

@article{kianercy2012dynamics,
  title={Dynamics of {B}oltzmann {Q} learning in two-player two-action games},
  author={Kianercy, Ardeshir and Galstyan, Aram},
  journal={Physical Review E},
  volume={85},
  number={4},
  year={2012}
}

\end{document}